\def\BibTeX{{\rm B\kern-.05em{\sc i\kern-.025em b}\kern-.08em
    T\kern-.1667em\lower.7ex\hbox{E}\kern-.125emX}}
\DeclareMathAlphabet{\mathcal}{OMS}{cmsy}{m}{n}
\newtheorem{thm}{\bf Theorem }%[section]
\newtheorem{prop}[thm]{\bf Proposition}%[section]
\newtheorem{deff}[thm]{\bf Definition}%[section]
\newtheorem{rem}[thm]{\bf Remark }%[section]
\newtheorem{ass}[thm]{\bf Assumption}%[section]
\renewcommand{\L}{\mathcal{L}}
\newcommand{\I}{\mathcal{I}}
\newcommand{\C}{\mathcal{C}}
\newcommand{\R}{\mathbb{R}}
\newcommand{\nn}{\mathcal{N}}
\newcommand{\ppb}{\mathsf{P}}
\newcommand{\eeb}{\mathsf{E}}
\newcommand{\fr}{\mathscr{F}}
\newcommand{\Xh}{\widehat{X}}
\newcommand{\add}{\mathsf{ADD}}
\newcommand{\cadd}{\mathsf{CADD}}
\newcommand{\eps}{\varepsilon}
\newcommand{\ra}{\rightarrow}
\newcommand{\set}[1]{\left\{#1\right\}}
\newcommand{\kl}{D_{\operatorname{KL}}}
\newcommand{\arginf}{{\operatorname{arginf}}}
\newcommand{\Qed}{\hfill$\diamond$}
\newcommand{\tg}{\tilde{g}}
\newcommand{\ttg}{\accentset{\approx}{g}}
 \newcommand{\tf}{\tilde{h}}
\newcommand{\ttf}{\accentset{\approx}{h}}
\title{\LARGE \bf Target Prediction Under Deceptive Switching Strategies via Outlier-Robust Filtering of Partially Observed Incomplete Trajectories}
\author{ Yiming Meng, Dongchang Li, and Melkior Ornik% <-this % stops a space
\thanks{This research was supported by the Office of Naval Research under grant number N00014-23-1-2651.} %in part by NASA under grant numbers 80NSSC21K1030 and 80NSSC22M0070, by the Air Force Office of Scientific Research under grant number FA9550-23-1-0131, and in part by the Natural Sciences and Engineering Research Council of Canada and the Canada Research Chairs program.}% <-this % stops a space
\thanks{
Yiming Meng is with  the
Coordinated Science Laboratory, University of Illinois Urbana-Champaign,
Urbana, IL 61801, USA.
        {\tt\small ymmeng@illinois.edu}.
}
\thanks{
Dongchang Li is with the Department of Applied Mathematics,
University of Waterloo, Waterloo ON N2L 3G1, Canada {\tt\small d235li@uwaterloo.ca}.}
\thanks{Melkior Ornik is with the Department of Aerospace Engineering and the
Coordinated Science Laboratory, University of Illinois Urbana-Champaign,
Urbana, IL 61801, USA.
        {\tt\small mornik@illinois.edu}.}
}
\begin{document}

\maketitle
\thispagestyle{empty}
\pagestyle{empty}

%%%%%%%%%%%%%%%%%%%%%%%%%%%%%%%%%%%%%%%%%%%%%%%%%%%%%%%%%%%%%%%%%%%%%%%%%%%%%%%%
\begin{abstract}
Motivated by a study on deception and counter-deception, this paper addresses the problem of identifying an agent’s target  as it seeks to reach one of two targets in a given environment.  In practice, an agent may initially follow a strategy to aim at one target but decide to switch to another midway. Such a strategy can be deceptive when the counterpart only has access to imperfect observations, which include heavily corrupted sensor noise and possible outliers,  making it difficult to visually identify the agent’s true intent.  To counter deception and identify the true target, we utilize prior knowledge of the agent’s dynamics and the imprecisely observed partial trajectory of the agent’s states to dynamically update the estimation of the posterior probability of whether a deceptive switch has taken place. 
However,   existing methods in the literature have not achieved effective deception identification within a reasonable computation time. We propose a set of   outlier-robust change detection methods to track relevant change-related statistics efficiently, enabling the detection of deceptive strategies in hidden nonlinear dynamics with reasonable computational effort. The performance of the proposed framework is examined for Weapon-Target Assignment (WTA) detection under deceptive strategies, using random simulations in the kinematics model with external forcing.
%Additionally, in practice, we also typically aim to identify the agent's target as quickly as possible.  Coupled with the additional requirement to identify the agent's target as quickly as possible, previous methods in the literature did not provide effective deception identification within a reasonable computation time. %From a filtering perspective, 
%We study a set of   modified  outlier-robust   change detection   methods for tracking relevant change-related statistics in practice, aiming to detect whether a deceptive strategy has been used to alter the target in hidden nonlinear dynamics with reasonable computational effort.

\end{abstract}

\begin{keywords}
 Deception; target prediction;   change detection; outlier-robust filters. 
\end{keywords}

%%%%%%%%%%%%%%%%%%%%%%%%%%%%%%%%%%%%%%%%%%%%%%%%%%%%%%%%%%%%%%%%%%%%%%%%%%%%%%%%
\section{Introduction}

Deception, the act of inducing a false belief in an adversary to achieve a desired objective, is of obvious interest to research in defense \cite{caddell2004deception, daniel1982propositions}, cybersecurity \cite{amin2012cyber,kwon2013security}, social robotics \cite{sharkey2021we,
wagner2011acting}, search and rescue \cite{shim2015benefits}, etc.  In frameworks where the deceptive agent attempts to   reach a particular target, the agent’s trajectory is often used to plant an incorrect belief about its purpose in the adversary \cite{dragan2014analysis,masters2017deceptive,ornik2020measuring, ornik2018deception}.

Motivated by the desire to predict the target of a possibly deceptive agent, this paper presents the problem of quantitatively ascertaining the agent’s target by using a model of the agent dynamics and observing its trajectory under the following constraints: 1) the observations are corrupted by stochastic noise  with the possible presence of outliers, and 2) observations of the complete state are available only at finitely many instants.

Previous efforts in control related deception or counter-deception include the following works. The series of studies in \cite{karabag2019optimal, ornik2020measuring, ornik2018deception} primarily addresses the design of deception strategies against counter-deception measures. 
From the opposite perspective, the work in \cite{ramirez2011goal} addresses the counter-deception problem   from a modeling perspective and proposes a partially observed Markov decision process (POMDP) framework to explore the target prediction problem. Essentially, the model approaches the deception concept    from the probabilistic prior belief of reaching multiple targets and aims to compute the posterior probability of reaching a certain target based on the observation. The computation seeks to minimize the cost between the belief in following some optimal reachability strategy and the actual observation. 
The works \cite{masters2018cost, masters2019cost} further discuss goal recognition based on the framework of path planning and use the ranking of costs associated with a deceptive agent to estimate the probability of reaching one of many targets based on observations. In addition, the recent work \cite{ornik2020measuring} 
  tackled a similar problem of detecting agent deception, although it assumed that the agent chooses a path to the target uniformly at random.  

Previous approaches pave the way toward counterdeceptive target prediction strategies. However, much remains to be improved. The  assumption that an agent chooses its path entirely at random often does not hold in practice.  %More importantly, even when the set of possible decision-making strategies of the deceptive agent is known to us, computing the  probability distribution over the set of the agent's paths becomes correspondingly more challenging.  
even when the observer knows the set of possible decision-making strategies available to the deceptive agent, computing the probability distribution over the agent's paths remains challenging. This is particularly true when observations are partial and may contain outliers. In terms of modeling, prior work has primarily focused on POMDP frameworks, where deception is conceptualized as agents selecting a single control strategy to steer paths in   a way that it seemingly reaches several targets with similar probabilities. This approach requires that the selection of such a  control strategy to be more restrictive.

Considering the aforementioned drawbacks, we model deception and target prediction differently in this paper. To demonstrate the idea, we work on the basic dual-target prediction model where the deceptive agent makes a strategic decision change at an uncertain time. Using corrupted and incomplete observations, we develop a statistical estimation framework that enables effective target prediction. %with a deceptively changed decision from the deceptive agent at an uncertain moment, and use corrupted incomplete observations to handle the statistical estimation, which facilitates target prediction. 
Such modeling is well known in hidden Markov models (HMMs), which are generally equipped with nonlinear dynamics. 

From a filtering perspective, we study an effective method of tracking statistics, namely likelihood ratio functions, to compute the conditional probability of reaching one of the two targets. Particularly, we integrate an outlier-robust filter from \cite{hassan2002nonlinear} with the likelihood ratio testing to enhance computational efficiency. This framework simultaneously computes posterior reachability probabilities from finite-horizon partial observations, enabling quantitative estimation of deceptive behavior. %%to improve computational efficiency and simultaneously determine the posterior probability of reachability based on finite-horizon partial observations to fulfill the task of quantitatively estimating deceptive behavior. 
Additionally, when the observation horizon becomes excessively long, we reuse existing statistics and leverage established quickest change detection (QCD)   to determine an optimal stopping procedure that minimizes the average detection delay following the true change point. %if the observations are unnecessarily long, we reuse the statistic and leverage the rich results on quickest change detection (QCD) to determine an optimal stopping procedure. 
This approach reduces the average  delay in detecting deceptive changes  while maintaining a low false alarm probability, thus achieving the quickest estimation of target changes with greater statistical confidence.

It is worth noting that QCD has attracted extensive attention over the past decades. The rich literature provides   theoretical guarantees for QCD algorithms in change detection, covering both general signals and HMM-based observation signals \cite{tartakovsky2017asymptotic, fuh2018asymptotic, tartakovsky2005general}. Attempts have also been made to use QCD for target detection in various application contexts \cite{kent2000trail, tartakovsky2002efficient, tartakovsky2004change}, though not specifically for HMMs. Additionally, the QCD problems in nonlinear HMMs constructed from these underlying signals have not been extensively studied and remain poorly understood beyond linearizations. One major difficulty arises from the lack of a robust and accurate likelihood ratio approximation, as mentioned above. It is natural to recall outlier-robust nonlinear filtering techniques that provide recursive algorithms for approximating the conditional density of the state variables. 

Hence, one of the main contributions of this paper is the detailed mathematical formulation of target prediction statistics using an outlier-robust nonlinear filter. This result will also be used to formulate outlier robust filter-enhanced QCD algorithms.
Additionally, we demonstrate through numerical examples how the proposed framework effectively 
enables counter-deception by incorporating more realistic agent modeling assumptions while achieving greater computational efficiency than previous approaches. %addresses the concept of counter-deception, incorporating more realistic assumptions of the agent's model and achieving more efficient computation than previous work. 
Specifically, we test the results in a case study for Weapon-Target Assignment (WTA) detection under deceptive switching strategies, employing   simulations in the kinematics model. Finally, we discuss the potential of extending the current framework to predict and detect multiple possible targets in real-time, enhancing its suitability for dealing with counter-deception. Future work will build on the insights from this  paper.

\section{Preliminaries for Deceptive Agent Modeling and Target Prediction}\label{sec: pre}
We   use the following basic scenario to introduce the assumptions on the agent model, the concept of deception, and the assumptions on the observations. In this basic scenario, a deceptive agent seeks to move to one of the two preset targets placed in the given environment. The agent may strategically switch targets at an uncertain moment, which can be deceptive when a fair amount of noise contaminates the observations, making it difficult to identify visually. In this scenario,  we assume the role of an observer attempting to detect deception and predict the agent's intended target. This prediction is based on a finite horizon of noisy observations, with the specific objective of determining whether a change point has occurred.%we assume the role of an observer trying to uncover deception
%%and intend to predict the target the agent is aiming to reach based on a fixed finite horizon of noisy observations, in the sense of identifying whether the change point occurs. %the role of counterdeception, 

In this section, we provide an overview of the basic heuristics for modeling and the relevant probability measures for target prediction statistics.

%We introduce the problem-solving framework using this basic scenario and then extend it to more generalized problem formulations by involving more possible targets and incorporating unknown prior knowledge.

\subsection{Standard Modeling of Agent Motion and Targets}
Let $(\Omega, \mathscr{F}, %\{\mathscr{F}_t\}_{t\geq 0}, 
\ppb)$ be some probability space, and let $\rho$ denote the   density   of $\ppb$. Suppose the agent has a continuous-time signal $X:=\set{X(t)}_{t\geq 0}$  governed by the following stochastic differential equation: 
\begin{equation}\label{E: sys}
    dX(t)=\left\{\begin{array}{lr} 
F_\alpha(X(t), u(t))dt + \eps B_\alpha dW(t),\;\;t<\nu;\\
F_\beta (X(t), u(t))dt + \eps B_\beta dW(t),\;\;t\geq \nu,   
\end{array}\right. 
\end{equation}
where $W :=\{W(t)\}_{t\geq 0}$ is a   Wiener process; $\eps\in[0,1)$ represent the intensity of the noise term; the quantities $F_j$, $B_j$ for all $j\in\{\alpha,\beta\}$  have
proper dimensions; $u$ denotes the control signal; $\nu$ denotes the moment when the agent switches dynamics.

We suppose the observations are taken at discrete times $t_n:= n\delta_t$ for $n\geq 0$ and some     sampling period $\delta_t$. The observation at index $n$ is
of the form
\begin{equation}\label{E: observation_1}
    Y_n = H(X_n) + V_n,
\end{equation}
where $X_n:=X(n\delta_t)$; $V_n$   is i.i.d. with   distribution $\nn(0, R_n)$ for each $n$,  and $\{V_n\}_{n\geq 0}$ is independent of  $W$; $H$ is the observation channel.   We also introduce the shorthand notation $X_i^n:=\set{X_i, \cdots, X_n}$ for the discrete-time joint states, and $Y_i^n := \set{Y_i, \cdots, Y_n}$ for the observations from $t_i$ to   $t_n$ ($i\leq n)$.

\begin{ass}\label{ass: discrete_time}
    For simplicity in demonstrating the idea of target detection using  outlier-robust filtering, we assume that $\nu$ can only occur at $k\delta_t$ for some random integer $k\geq 0$. %We also assume that the change point $\nu$ is a random variable taking values on the non-negative integers. 
    
    We adopt the commonly used assumption that the prior knowledge of $\pi_k = \ppb(\nu = t_k)$ follows a geometric distribution, i.e., $\pi_k = d(1-d)^{k-1}$ for some $d \in (0, 1)$.\Qed
\end{ass}

We model the  potential targets $\Gamma_\alpha, \Gamma_\beta$ as  some closed balls centered at some  states $x_{e,\alpha}$ and $x_{e, \beta}$ of the system \eqref{E: sys}. 

\begin{ass}\label{ass: dual_sys} We assume that %for any initial condition and 
for each $j\in\set{\alpha, \beta}$, we have   knowledge of the control strategy %%there exists %some $t_{f,j}\in (0, \infty]$ and  
%%a control strategy 
$\kappa_j$,  such that each noise-free  system \iffalse noise-free system 
\begin{equation}\label{E: sys2}
    \dot{x}=
F_j(x, \kappa_j(x))    
\end{equation}\fi 
\begin{equation}\label{E: sys2}
    dX(t)=
F_j(X(t), \kappa_j(X(t)))dt  
\end{equation}
is exponentially stable w.r.t. $x_{e,j}$ under the state feedback control $u(t)=\kappa_j(x(t))$. 
%can reach $\Gamma_j$ %$x_{e,i}$  at   $t_{f,j}$ %almost-surely (a.s.)  under the state feedback control $u(t)=\kappa_j(x(t))$. 
\Qed
\end{ass}

\begin{rem}
The purpose of introducing the notion of stability is to facilitate the construction of the control law for the deceptive agent  and to guarantee some sufficient conditions that ensure the detection algorithm works. % (which will be discussed in Section \ref{sec: qcd_filter}). 
Particularly, for system \eqref{E: sys}, exponential stability implies the reach-and-stay property of the noise-free solution  w.r.t. each target when $\eps = 0$ \cite{bertsekas1971minimax}, and ensures probabilistic reachability %of stochastic solution   
with a probability arbitrarily close to $1$ when $\eps>0$ \cite{10443446}. \Qed
    %Note that we have slightly abused the notion for $t_{f,j}=\infty$, which indicates that the corresponding reachability scenario is equivalent to asymptotic stability under a state feedback control.

    %For $\eps>0$, the reachability is in the almost sure sense, i.e. $X_{t_{f,j}}\in \Gamma_j$ a.s.. One simple way is to assign any arbitrary terminal distribution for the random variable  $X_{t_{f,j}}$ with its support being a subset of $\Gamma_j$.   A typical way of finding the corresponding feedback control is to minimize a specified cost function (considering a terminal cost) by solving an optimal control problem using generalized Hamiltonians. For stochastic systems, the solution can be found by solving a forward-backward stochastic differential equation (FBSDE), whereas for deterministic systems, it has been well studied in the context of robotic systems. \Qed
\end{rem}

Note that Assumptions 1 and 2   define a reference distribution of control strategies for deception detection. However, conventional approaches like parameter identification through distribution matching may be ineffective for finite observation horizons, particularly when contaminated by outliers. By leveraging insights into the agent's decision-making heuristics, we propose a QCD-based target prediction method robust to outlier observations, as detailed in Section \ref{sec: qcd_filter}. %Understanding the decision-making heuristics of the agent, we aim to predict the target using a QCD method with possible outlier observations, which will be introduced in Section \ref{sec: qcd_filter}. 
To facilitate the derivation of detection-related statistics, we require an explicit discrete-time state evolution for $X$. By combining Assumptions \ref{ass: discrete_time} and \ref{ass: dual_sys}, this can be expressed as $ X_{n+1} = f_{j,n}(X_n) + \eps B_jW_n, \;\;j\in\{\alpha, \beta\}$. %For this purpose,  by combining Assumptions  \ref{ass: discrete_time} and \ref{ass: dual_sys}, we can replace \eqref{E: sys} and alternatively work on the following discrete-time dynamical system 
%Combining Assumptions \ref{ass: discrete_time} and \ref{ass: dual_sys}, for the purpose of target detection using a QCD method  with possible outlier observations, which will be introduced in Section \ref{sec: qcd_filter}, we can replace \eqref{E: sys} and alternatively work on the following discrete-time dynamical system

Note that, in the equation above, $f_{j,n}$ can be implicitly obtained from a numerical scheme for each $j\in\{\alpha, \beta\}$, where the Euler-Maruyama method is commonly used. Furthermore,  $W_n:= W({(n+1)\tau})-W({n\tau})$ represents the increment of the Wiener process over the interval from $n\tau$ to $(n+1)\tau$. Clearly, for each $j\in\{\alpha, \beta\}$, given that $\eps>0$,$\{\eps B_jW_n\}_{n\geq 0}$ is a Gaussian process; and for each $n\geq 0$, we denote the distribution as $\eps B_jW_n\sim\nn(0, \eps Q_{j, n})$. When $\eps=0$, %the distribution of 
each $\eps B_jW_n$ is a point mass. %Dirac measure.

%Suppose $\tau_c=(k-1)\tau$ for some $k$, 

\subsection{Probability Measures for the   Dual-Target Model}\label{sec: intro_qcd}

As the change-point $\nu$ is uncertain to the observer, we aim to detect $\nu$   %as soon as possible 
based on the observations.  %Working on Eq. \eqref{E: observation_1},  we introduce the shorthand notation $Y_i^n := \set{Y_i, \cdots, Y_n}$ for the observations from $t_i$ to   $t_n$ ($i\leq n)$. We consider the filtration $\fr_n:=\sigma(Y_0^n)$ as the $\sigma$-algebra generated by the observation $Y_0^n$. 
%We also denote $Y:=Y_0^\infty$ for simplicity. 
We now introduce the following frequently-used %notion of 
probability measures, which will later be used to determine whether the deceptive decision at $\nu$ has been triggered based on the observation $Y$.

%Before introducing the methods to determine whether $\nu$ has been triggered based on the observation $Y$, we first introduce the following frequently used likelihoods-related notation.

We first introduce two mutually locally absolutely continuous probability laws, %measures (probability laws), 
$\ppb_\infty$ for the normal regime (where no change occurs) and $\ppb_0$ for the abnormal regime (where a change happens at some point), defined on the probability space. Accordingly,  we consider the filtration $\fr_n:=\sigma(Y_0^n)$ as the $\sigma$-algebra generated by the observations, and define the measures restricted to the filtration $\fr_n$ as $\ppb_\infty^{(n)}$ and $\ppb_0^{(n)}$, with their densities denoted as $\rho_\infty(Y_0^n)$ and $\rho_0(Y_0^n)$, respectively. Note that the induced conditional densities $\rho_j(Y_n|Y_0^{n-1})$  for any $j\in\{0, \infty\}$  may depend on $n$, especially in non-i.i.d. cases.   We therefore also write $\rho_{j,n}(Y_n|Y_0^{n-1})$  %for any $j\in\{0, \infty\}$  
when $n$ is emphasized,  and vice versa. Additionally, the post-change conditional probability density $\rho_{0,n}(Y_n|Y_0^{n-1})$ also generally depend on the change point $k$, and we write $\rho_{0,n}^{(k)}(Y_n|Y_0^{n-1})$ accordingly when $k$ is emphasized, and vice versa.

For a fixed $k$, if $\nu = k$, we introduce the change-induced probability measure as $\ppb_k(\cdot) = \ppb(\cdot\;|\nu=k)$, with density  $ 
    \rho_k(Y_0^n) = \rho_\infty(Y_0^{k-1})\cdot \rho_0(Y_k^{n}|Y_0^{k-1}) $
for any $n\geq k$. Using Bayes' rule and emphasizing the potential dependence on the changing point $k$ and the observation period $n$, we can also express $\rho_k(Y_0^n)$ as $\rho_k(Y_0^n) =\left(\prod_{i=0}^{k-1}\rho_{\infty, i}(Y_i|Y_0^{i-1})\right)\cdot\left(\prod_{i=k}^{n}\rho_{0, i}^{(k)}(Y_i|Y_0^{i-1})\right)$.
Recalling that $\pi_k = \ppb(\nu = k)$,  we define another induced (averaging) probability measure $\ppb^\pi(\cdot):=\sum_{k=0}^\infty\pi_k\ppb_k(\cdot)$.   
\iffalse
\begin{equation}\label{E: measure_pi}
    \ppb^\pi(\cdot):=\sum_{k=0}^\infty\pi_k\ppb_k(\cdot). 
\end{equation}\fi

We denote $\eeb$,  $\eeb_\infty$, $\eeb_0$,   $\eeb_k$, and $\eeb^\pi$ 
as the expectations w.r.t. the probability measures $\ppb$, $\ppb_\infty$, $\ppb_0$,   $\ppb_k$,  and $\ppb^\pi$, 
respectively.

\iffalse
\begin{rem}
    As a quick summary, $\ppb$ is a measure that encompasses broader measurability over $X$, $Y$, and the changing strategies. On the other hand, $\ppb_\infty$ and $\ppb_0$ are probability measures defined on $\bigvee_{n=0}^\infty\fr_n$ that   assess the marginals on $Y$. The probability measures $\ppb_k$  and $\ppb^\pi$ are    induced directly from $\ppb$ and consider the changing point. %When $\ppb_k$  and $\ppb^\pi$ is  restricted to $\bigvee_{n=0}^\infty\fr_n$, the formula is given in \eqref{E: density_k}, % and \eqref{E: measure_pi}, and are linked to $\ppb_\infty$ and $\ppb_0$. 
    \Qed
\end{rem}\fi

%Let $\ppb_n$ (correspondingly $\eeb_n$) be the induced probabilitymeasure (correspondingly expectation) when the change occurs at time $\nu=n$, i.e., $\ppb_n(\cdot) = \ppb(\cdot | \nu = n)$. For $n\geq 0$, let $\rho_n(\cdot)$ and $\rho_n(\cdot|\cdot)$ denote joint densities and conditional densities with respect to the measure $\ppb_n$. 

%We denote two sequences of conditional probability densities (pre- and post-change conditional densities) of model \eqref{E: sys_discrete} as $f_{0,i}(Y_i|Y_1^{i-1})$ and $f_{1,i}^{(k)}(Y_i|Y_1^{i-1})$, where $k$ is a realization of $\tau_c$ on which the post-change density depends.

%Let $p_n:=\ppb(n\geq \nu\;|\;Y_1^n)$ be the \textit{a posteriori} probability at time $n$ that the change has taken place given the observation up to time $n$.  

%There are many ways to determine whether $\nu$ has been triggered based on the observation $Y$. In the following section, we introduce how this can be  accomplished in the context of QCD. %, which consists of the following two major steps.

\section{Likelihoods and Procedures for  Change Detection}\label{sec: likelihoods}
There are many ways to identify the agent's path given its deceptive behavior, based on the observation process $Y$. In this section, we   explain how this can be accomplished %and introduce the necessary statistics 
in the context of change estimation.  We start with  an introduction of   likelihood functions.
\subsection{Constant-horizon observation statistics}
Let $p_n:=\ppb(n\geq \nu\;|\;Y_1^n)$ be the \textit{a posteriori} probability that the change occurred before time $t_n$. A probabilistic estimation of whether the deceptive decision at $\nu$ has been triggered,  based on a fixed-horizon observation, is obtained by calculating the statistics for $p_n$. We derive the likelihood ratio $\L_n$ of the hypotheses $\{\nu\leq n\}$ and $\{\nu> n\}$ as follows: %Let $\rho$ denote the joint density of $\mathds{1}_{\{\tau_c\leq n\}}$ and $Y_1^n$. Then, 
\begin{small}
    \begin{equation}\label{E: L_n}
\begin{split}
    \L_n &  = \frac{\sum_{k=1}^n\pi_k\prod_{i=1}^{k-1}\rho_{\infty,i}(Y_i|Y_1^{i-1})\prod_{i=k}^{n}\rho_{0,i}^{(k)}(Y_i|Y_1^{i-1})}{\ppb^\pi(\nu>n)\prod_{i=1}^n\rho_{\infty,i}(Y_i|Y_1^{i-1})}\\
    & = \frac{1}{\ppb^\pi(\nu>n)}\sum_{k=1}^n\pi_k\L_n^k,
\end{split}
\end{equation}
\end{small}
\noindent where $\L_n^k=\prod_{i=k}^n \Lambda_i^{(k)}$, 
\begin{equation}\label{E: Lambda}
    \Lambda_i^{(k)}=\frac{ \rho_{0,i}^{(k)}(Y_i|Y_1^{i-1})}{ \rho_{\infty,i}(Y_i|Y_1^{i-1})}, 
\end{equation}
and $\ppb^\pi(\nu>n)$ is the probability
of false alarm (PFA). 

%Conditioning on a required (arbitrarily small) probability of false alarm (PFA), 
Estimating the probabilistic estimation of $p_n$ (or $\L_n$) necessitates a statistical update of $\Lambda_i^{(k)}$.

\subsection{Brief Introduction to Change Detection Procedures}
The above direct estimation using $\L_n$ determines whether the deceptive decision occurred before a fixed observation stopping time, with probabilistic certainty. %whether the deceptive decision has been made before any constant observation stopping time  within a probabilistic metric. 
However, in practice, we are also interested in shortening the observation time and recognizing the deceptive behavior as quickly as possible. 

To improve the efficiency of counterdeception efforts, we introduce the following two metrics \cite{tartakovsky2005general} to guide us in determining the quickest stopping time for observation. 

One reasonable metric of the detection lag is the average
detection delay (ADD), defined as $\add(\tau):=\eeb^\pi(\tau-\nu | \tau \geq \nu)$. It can be shown that $\add(\tau)=\frac{\eeb^\pi(\tau-\nu)^+}{\ppb^\pi(\tau\geq \nu)} 
       =  \frac{1}{\ppb^\pi(\tau\geq \nu)} \sum_{k=1}^\infty \pi_k\ppb_k(\tau\geq k)\eeb_k(\tau-k | \tau\geq k)$. 
\iffalse
\begin{equation}
    \begin{split}
        &\add(\tau)=\frac{\eeb^\pi(\tau-\nu)^+}{\ppb^\pi(\tau\geq \nu)}\\
       = & \frac{1}{\ppb^\pi(\tau\geq \nu)} \sum_{k=1}^\infty \pi_k\ppb_k(\tau\geq k)\eeb_k(\tau-k | \tau\geq k). 
    \end{split}
\end{equation}
\fi
Another closely related metric is the conditional average
detection delay (CADD), defined as $\cadd(\tau):=\sup_{k\geq 1}\eeb_k(\tau-k | \tau \geq k)$, 
which captures the worst case scenario. 

Constrained by the need to maintain a lower probability of false alarm  $a\in(0,1)$, we work on the set $\C(a):=\set{\tau: \ppb^\pi(\tau<\nu)\leq a}$
and determine the optimal stopping strategy by either $\arginf_{\tau\in\C(a)}\add(\tau)$ or $\arginf_{\tau\in\C(a)}\cadd(\tau)$.

There is a rich literature proving that, under mild conditions, the Shiryaev stopping rule 
\begin{equation}\label{E: Shiryaev}
    \tau_s(B_a)=\inf\{n\geq 1: \L_n\geq B_a\}, \;B_a=\frac{1-a}{a},
\end{equation}
can asymptotically solve the optimization problem for $\arginf_{\tau\in\C(a)}\add(\tau)$ as $a\ra 0$ \cite{tartakovsky2005general}. %In practice, one can set a PFA arbitrarily close to $0$, and the stopping rule can approximate the optimal stopping strategy. 

Similarly, by defining $Z_i^{(k)}:=\log (\Lambda_i^{(k)})$ and $T_n = \max_{1\leq k\leq n} \sum_{i=k}^n Z_i^{(k)}$, 
we use the cumulative sum (CUSUM) stopping procedure 
\begin{equation}\label{E: cusum}
    \tau_c=\inf\set{n\geq 1: T_n\geq c}
\end{equation}
to aymptotically solve $\arginf_{\tau\in\C(a)}\cadd(\tau)$ as $c\ra \infty$.  Note that the CUSUM rule is designed to check the worst-case risk scenario and does not require prior knowledge of $\pi_k$, making it more flexible than the Shiryaev rule for predicting deceptive behavior even when $\nu$ is unknown. 

\section{Change Detection Using Outlier-Robust Filtering for the Dual-Target Mode}\label{sec: qcd_filter}
In this section, we construct the change detection statistics using  outlier-robust nonlinear filter. The key step is to compute the quantity $\Lambda_i^{(k)}$ as defined in  \eqref{E: Lambda}. We first derive the formula for the likelihood function of general hidden Markov models (HMM) when outliers may exist and an outlier-robust nonlinear filter is required. %when a nonlinear filter is required.  
We then integrate the filtering strategy to demonstrate the filter-based approximation of the likelihood function.

\subsection{Likelihood function for HMMs with the appearance of outlier indicators}\label{sec: derivation_likelihood}
At this stage, we do not distinguish between pre-change and post-change probability measures or densities to concisely conduct derivation of likelihood functions using the outlier-robust filtering method in \cite{chughtai2022outlier}.

We first introduce an indicator vector $\I_i\in\R^m$, where for each independent sensor at dimension $l$ at time $t_i$, 
we let 
\begin{equation}\label{E: indicator}
   \I_{i,l}=\left\{\begin{array}{lr} 
 \varsigma>0, \;\text{if an outlier occurs}\\
 1,\;\text{otherwise}.
\end{array}\right. 
\end{equation}
By assigning the probability of no outlier in the $l$-th observation as $\theta_{i,l}\in[0, 1]$ for each instant $i$, the density of $\I_i$ can be explicitly expressed as  $\rho(\I_{i,l})   =\prod_{l=1}^m\rho(\I_{i,l})  = \prod_{l=1}^m\left[(1-\theta_{i,l})\delta(\I_{i,l}-\varsigma)+\theta_{i,l}\delta(\I_{i,l}-1)\right]$.

\iffalse
\begin{small}
    \begin{equation}\label{E: indicator}
\begin{split}
        \rho(\I_{i,l}) & =\prod_{l=1}^m\rho(\I_{i,l})  = \prod_{l=1}^m\left[(1-\theta_{i,l})\delta(\I_{i,l}-\varsigma)+\theta_{i,l}\delta(\I_{i,l}-1)\right].
\end{split}
\end{equation}
\end{small}\fi

We   assume that observations are obtained from independent sensors, and consequently, we model the outliers independently for each observation dimension. We also assume that $\I_i$ and $X_i$ are  
  independent. 

We now derive the outlier-robust likelihood function for HMMs. By \eqref{E: sys} and Assumption \ref{ass: dual_sys}, the evolution of the state process $\{X_n\}_{n\geq 0}$
from time $t_i$ to time $t_{i+1}$ satisfies Markov properties, and 
is determined by the transition probability   $\ppb(X_{i+1}\in A |X_i)=\int_A \rho(x|X_i)dx$, $\forall i$. 
The observations $\{Y_n\}_{n\geq 0}$ should satisfy $\ppb(Y_{i}\in B| X_0^{i}, \I_i, Y_0^{i-1})=\int_B \rho (y |X_i, \I_i)dy$, $\forall i$. 

To distinguish the transition along the state and the observation, we denote $g_i(X_{i-1}, X_i):= \rho(X_i |X_{i-1})$
as transition probability densities, and denote $h_i(Y_i|X_i, \I_i):= \rho(Y_i|X_i, \I_i)$
as the observation likelihood.

For simplicity, we let $\widehat{X}_i$ denote the joint variable $(X_i, \I_i)$ for each $i$. 
Then, the joint density $\rho(X_i, \I_i, Y_0^i)=\rho(\Xh_i, Y_0^i)$ is such that
\begin{small}
    \begin{equation}\label{E: joint_density_XY}
\begin{split}
       & \rho(\Xh_i, Y_0^i) \\
       = &\int \rho(\Xh_{i-1}, \Xh_i, Y_0^i)d\Xh_{i-1}\\
       = & \int \rho(\Xh_{i-1},  Y_0^{i-1})\rho(\Xh_i|\Xh_{i-1})h_i(Y_i|\Xh_i)d\Xh_{i-1}\\
         = & \int \rho(\Xh_{i-1},  Y_0^{i-1})g_i(X_{i-1}, X_i)h_i(Y_i|\Xh_i)\rho(\I_i)d\Xh_{i-1}. 
\end{split}
\end{equation}
\end{small}
\noindent 
By averaging out the $\Xh_i$, we obtain 
\begin{small}
     \begin{equation}\label{E: Y_0^n}
\begin{split}
       & \rho( Y_0^i) \\
       = &\iint \rho(\Xh_{i-1},  Y_0^{i-1})g_i(X_{i-1}, X_i)h_i(Y_i|\Xh_i)\rho(\I_i)d\Xh_{i-1}d\Xh_i. 
\end{split}
\end{equation}
\end{small}

\noindent Then, it is clear that
\begin{small}
   \begin{equation}\label{E: formula_1}
\begin{split}
        & \rho(Y_i|Y_0^{i-1})=\frac{\rho( Y_0^i)}{\rho( Y_0^{i-1})}\\
    = & \frac{1}{\rho( Y_0^{i-1})}\iint  \rho(\Xh_{i-1},  Y_0^{i-1})g_i(X_{i-1}, X_i) h_i(Y_i|\Xh_i)\rho(\I_i)d\Xh_{i-1}d\Xh_i\\
   = & \iint  \underbrace{g_i(X_{i-1}, X_i)}_\text{state transition}\cdot\underbrace{h_i(Y_i|\Xh_i)}_\text{observation}\cdot\underbrace{\rho(\Xh_{i-1}|Y_0^{i-1})}_\text{outlier-robust filtering}\cdot\rho(\I_i)d\Xh_{i-1}d\Xh_i.
\end{split}
\end{equation} 
\end{small}

Now, we provide the explicit form of  $\Lambda_i^{(k)}$ based on the derivation in Eq. \eqref{E: formula_1}, with distinguished pre/post-change probability densities. In this case, for each $k$, we write $ g_i(X_{i-1}, X_i)=\tg_i(X_{i-1}, X_i)$ if $i<k$ and $g_i(X_{i-1}, X_i)=\ttg_i^{(k)}(X_{i-1}, X_i)$ otherwise. Similarly, $h_i(Y_i|\Xh_i) = \tf_i(Y_i|\Xh_i)$ if $i<k$, and $h_i(Y_i|\Xh_i) = \ttf_i(Y_i|\Xh_i)$ otherwise.
Then, for $i\geq k$, the likelihood ratio $\Lambda_i^{(k)}$ can be explicitly written as   \eqref{E: likelihood_formula} below. % or \eqref{eq:2}.
\begin{figure*}[!t]
\normalsize

\begin{align}
  \Lambda_i^{(k)}  =\frac{ \rho_{0,i}^{(k)}(Y_i|Y_1^{i-1})}{ \rho_{\infty,i}(Y_i|Y_1^{i-1})}& =  \frac{\iint \ttg_i^{(k)}(X_{i-1}, X_i)\ttf_i(Y_i|\Xh_i) \rho_k(\Xh_{i-1}|Y_0^{i-1})\rho(\I_i)d\Xh_{i-1}d\Xh_i}{\iint  \tg_i (X_{i-1}, X_i)\tf_i(Y_i|\Xh_i)\rho_\infty(\Xh_{i-1}|Y_0^{i-1})\rho(\I_i)d\Xh_{i-1}d\Xh_i} \label{E: likelihood_formula}   
  %& =  \frac{\int  \ttf_i(Y_i|\widehat{X}_i) \rho_k(\widehat{X}_{i}|Y_0^{i-1})d\widehat{X}_i}{\int   \tf_i(Y_i|\widehat{X}_i)\rho_\infty(\widehat{X}_{i}|Y_0^{i-1})d\widehat{X}_i}\label{eq:2}
\end{align}

\begin{equation}\label{E: P_ik}
     P_{i,k}^-= \left\{\begin{array}{lr} 
\int (f_{\alpha,i}-m_i^-)(f_{\alpha,i}-m_i^-)^\mathsf{T}q(X_{i-1})dX_{i-1} + Q_{i-1}, \;i<k\\
\int (f_{\beta,i}-m_i^-)(f_{\beta,i}-m_i^-)^\mathsf{T}q(X_{i-1})dX_{i-1} + Q_{i-1},   \;i\geq k. 
\end{array}\right. %\int (f_{j,i}-m_i^-)(f_{j,i}-m_i^-)^\mathsf{T}(X_{i-1})q(X_{i-1})dX_{i-1} + Q_{i-1}. 
\end{equation}

 \begin{equation}
\begin{split}
    \widehat{\Lambda}_i^{(k)} &     =  \frac{\iint \ttg_i^{(k)}(X_{i-1}, X_i)\ttf_i(Y_i|\Xh_i) q_k(X_{i-1})q_k(\I_{i-1}) \rho(\I_i)d\Xh_{i-1}d\Xh_i}{\iint  \tg_i (X_{i-1}, X_i)\tf_i(Y_i|\Xh_i) q_\infty(X_{i-1})q_\infty(\I_{i-1})\rho(\I_i)d\Xh_{i-1}d\Xh_i},\;\;i\geq k.
\end{split}
\label{E: likelihood_formula_2}
\end{equation}

\iffalse
\begin{equation} 
\begin{split}
    \rho(Y_i, \I_i|Y_0^{i-1})  
   & =  \iint   g_i(X_{i-1}, X_i)  h_i(Y_i|X_i, \I_i)  \rho(X_{i-1}, \I_{i-1}|Y_0^{i-1})\rho(\I_i) d\widehat{X}_{i-1}d\widehat{X}_i\\
   & = \iint g_i(X_{i-1}, X_i)  h_i(Y_i|\widehat{X}_i)  \rho(\widehat{X}_{i-1}|Y_0^{i-1})\rho(\I_i) d\widehat{X}_{i-1}d\widehat{X}_i.
\end{split}\label{E: likelihood_formula_mid}
\end{equation}
\fi

\hrulefill
\vspace*{4pt}
\end{figure*}

    %Note that in Eq. \eqref{E: transition}, we have provided a subscript $(k)$  indicating that the post-change transition may depend on the change point $k$. This fact, which was introduced in Section \ref{sec: intro_qcd}, aligns well with certain problems and provides a more robust expression.

\subsection{Construction of the change detection statistics
using outlier-robust filter}

%the formula $\rho(X_{i}|Y_0^{i}) = \frac{\rho(X_i, Y_0^i)}{\rho(Y_0^i)}$, where $\rho(X_i, Y_0^i)$  can be derived by the same virtue  of Eq.  \eqref{E: joint_density_XY} without the outlier indicator.  

We aim to provide explicit expressions for the terms in the integrands of \eqref{E: likelihood_formula} %or \eqref{eq:2} 
to demonstrate how they can be inferred by an outlier-robust filter. %For simplicity, we only work on the derivation for \eqref{E: likelihood_formula} in this subsection. %For the same reason as in Section \ref{sec: derivation_likelihood}, we do not distinguish between pre-change and post-change probability measures or densities to concisely convey the idea.

\subsubsection{Measurement Likelihood}
The measurement likelihood conditioned on the
current state $X_i$ and the indicator $\I_i$, independent of all the
historical observations $Y_0^{i-1}$, is proposed to follow a Gaussian
distribution
\begin{small}
    \begin{equation}\label{E: measurement}
    \begin{split}
       & h_i(Y_i|\widehat{X}_i)\\
       = & \nn(Y_i | H(X_i), \Sigma_i^{-1})\\
       = & \frac{1}{\sqrt{(2\pi)^m|\Sigma_i^{-1}|}}\exp\left\{-\frac{1}{2}(Y_i-H(X_i))^\mathsf{T}\Sigma_i(Y_i-H(X_i))\right\}\\
       = & \prod_{l=1}^m\frac{1}{\sqrt{2\pi R_i^{(ll)}/\I_{i,l}}}\exp\left\{-\frac{(Y_i^{(l)}-H^{(l)}(X_i))^2}{2R_i^{(ll)}}\I_{i,l} \right\},
    \end{split}
\end{equation}
\end{small}

\noindent where $\Sigma_i:=R_i^{-1}\operatorname{diag}(\I_i)$ \cite{chughtai2022outlier}.

\subsubsection{Variational Bayesian Inference for Outlier Robust   Filters}

In the conventional filtering problem, the conditional probability density $\rho(X_{i-1}|Y_0^{i-1})$ is recursively updated by the prediction procedure $\rho(X_{i}|Y_0^{i-1})\varpropto g_i(X_{i-1}, X_i)\rho(X_{i-1}|Y_0^{i-1})$ and the filtering procedure $\rho(X_{i}|Y_0^{i})\varpropto h_i(Y_i|X_i)\rho(X_{i}|Y_0^{i-1})$. For nonlinear systems, $\rho(X_{i-1}|Y_0^{i-1})$ can be approximated by a collection of particles with discrete masses and updated by the standard prediction and filtering procedures \cite{baxendale2025quickest}. However, particle filters
suffer from issues of computational efficiency and scalability. 

To reduce the computational complexity involved in sequential approximating $\rho(\Xh_{i-1}|Y_0^{i-1})$ or $\rho(\Xh_{i}|Y_0^{i-1})$, we  resort to the standard Variational Bayes (VB) method, where the joint posterior is approximated as a product of marginal distributions
    \begin{equation}\label{E: VB}
   \rho_k(\Xh_{i}|Y_0^{i}) \approx q_k(X_i)q_k(\I_i), \;k\in\{1, 2, \cdots, \infty\}. 
\end{equation}
  The VB approximation aims to minimize the Kullback-Leibler (KL) divergence  \cite{lovric2011international} %\footnote{Let $\eta_0$ be the probability density function of a distribution to be compared with, and let $\eta_1$ be another probability density function. Then, the KL divergence, which measures how much $\eta_1$ differs from $\eta_0$, is defined as $\kl (\eta_1||\eta_0) :=\int \log\left(\frac{\eta_1(x)}{\eta_0(x)}\right)\eta_1(x)dx$.} 
   between the r.h.s. and l.h.s. of \eqref{E: VB}. Accordingly, the terms in the above product approximation can be updated in a manner described as $q(X_i)\varpropto e^{\eeb_{q(\I_i)}[\ln(\rho(\Xh_{i}|Y_0^{i})]}$ and $q(\I_i)\varpropto e^{\eeb_{q(X_i)}[\ln(\rho(\Xh_{i}|Y_0^{i})]}$, where $\eeb_{q(\cdot)}$ represents the expectation operator with respect to the distribution $q(\cdot)$.

For tractability, we integrate general Gaussian filtering results into the VB framework and extend the method described in \cite{chughtai2022outlier}. The updates for $q_k(X_i)$ and $q_k(I_i)$ for any $k\in\set{1, 2, \cdots, \infty}$%, for situations where $k \leq i$ and for $k = \infty$, respectively, 
used in \cite{chughtai2022outlier} are summarized as follows.

For $q_k(X_i)$, it is  approximated by a Gaussian distribution, i.e.,  $q_k(X_i)\approx\mathcal{N}(X_i | m_{i,k}^+, P_{i,k}^+)$, and the (observation-dependent) mean $m_{i,k}^+$ and covariance $P_{i,k}^+$ are sequentially updated by the following prediction and filtering procedure. The $m_{0,k}^+$ and $P_{0,k}^+$ are initialized with a known distribution, which is a Dirac measure at $X_0$ if the initial condition of the system is known to the observer. For $i\geq 1$, we approximate the predictive distribution as $\rho_k(X_{i}|Y_0^{i-1}) \approx \nn(X_i|m_{i,k}^-, P_{i,k}^-)$, where 
\begin{equation}
    m_{i,k}^-= \left\{\begin{array}{lr} 
\int f_{\alpha,i}(X_{i-1})q_k(X_{i-1})dX_{i-1}, \;i<k,\\
\int f_{\beta,i}(X_{i-1})q_k(X_{i-1})dX_{i-1},   \;i\geq k. 
\end{array}\right. %\int f_{j,i}(X_{i-1})q_k(X_{i-1})dX_{i-1},\;j\in\set{\alpha, \beta},
\end{equation}
and $P_{i,k}^-$ is given in \eqref{E: P_ik}.

The parameters at the filtering stage are updated by $m_{i,k}^+ = m_{i,k}^- + K_{i,k}\left(Y_i-\mu_{i,k}\right)$ and $P_{i,k}^+=P_{i,k}^- + C_{i,k}K_{i,k}^\mathsf{T}$, 
\iffalse
\begin{align}
    & m_{i,k}^+ = m_{i,k}^- + K_{i,k}\left(Y_i-\mu_{i,k}\right);\\
    & P_{i,k}^+=P_{i,k}^- + C_{i,k}K_{i,k}^\mathsf{T},
\end{align}\fi
where
\begin{small}
    \begin{align}
    & K_{i,k} = C_{i,k}(V_{i,k}^{-1}-V_{i,k}^{-1}(I+U_{i,k}V_{i,k}^{-1})^{-1}U_{i,k}V_{i,k}^{-1});\\
    & \mu_{i,k} = \int H(X_i)\rho_k(X_i|Y_0^{i-1})dX_i;\\
    & U_{i,k} = \int (H(X_i)-\mu_i) (H(X_i)-\mu_{i,k})^\mathsf{T}\rho_k(X_i|Y_0^{i-1})dX_i;\\
    & C_{i,k} = \int (X_i-m_{i,k}^-)(H(X_i)-\mu_{i,k})^\mathsf{T}\rho_k(X_i|Y_0^{i-1})dX_i,
\end{align}
\end{small}

\noindent and $V_{i,k}^{-1}=R_i^{-1}\left(\operatorname{diag}(\eeb_{q_k(\I_i)}(\I_i))\right)$. % denotes the covariance of $\nn(\cdot\;|0, \Sigma_i^{-1})\rho(\I_i)$. 

For $q_k(\I_i)$,  based on the VB approximation $q_k(\I_i)\varpropto e^{\eeb_{q_k(X_i)}[\ln(\rho(\Xh_{i}|Y_0^{i})]}$, the explicit formula is given as $q_k(\I_i) = \prod_{l=1}^m(1- \Phi_{i,l}^{(k)})\delta(\I_{i,l}-\varsigma) + \Phi_{i,l}^{(k)}\delta(\I_{i,l}-1)$, 
where  $\Phi_{i,l}^{(k)} = \frac{1}{1 + \sqrt{\varsigma}(\frac{1}{\theta_{i,l}}-1)\exp\left(\frac{W_{i,k}^{(ll)}}{2R_i^{(ll)}}(1-\varsigma)\right)}$
and $W_{i,k}^{(ll)} = \eeb_{q_k(X_i)}(Y_i^{(l)}-H^{(l)}(X_i))^2$. 

As a quick summary, to use the approximation $ \rho_k(\Xh_{i}|Y_0^{i})$ for any fixed $i$ and for any $k\in\set{1, 2, \cdots, \infty}$, the key is to sequentially update the (observation-dependent) pairs $(m_{\iota,k}^+, P_{\iota,k}^+)$ and $(m_{\iota+1,k}^-, P_{\iota+1,k}^-)$ for any $\iota\in\set{0, 1, \cdots, i}$. Real observation information $Y_\iota$ is only injected into $(m_{\iota,k}^+, P_{\iota,k}^+)$ at each filtering stage for $\iota\in\set{0, 1, \cdots, i}$ and will cumulatively contribute to the eventual $\rho_k(\Xh_{i}|Y_0^{i})$.

\subsubsection{Outlier-Robust Filters Induced Likelihood Ratio Function}
Combining \eqref{E: measurement} and \eqref{E: VB}, we obtain the explicit formula for the outlier-robust filters  induced likelihood ratio $\widehat{\Lambda}_i^{(k)}$, as shown in  \eqref{E: likelihood_formula_2}. This formula simply replaces the corresponding conditional probability density $\rho_k(\Xh_{i-1}|Y_0^{i-1})$ with the approximators $q_k(X_{i-1})q_k(\I_{i-1})$ for any $1\leq k\leq i$. The same applies to $\rho_\infty(\Xh_{i-1}|Y_0^{i-1})$. 

Note that, for special case where there are no outliers, $\eeb_{q_k(\I_{i,j})}(\I_{i,j})=1$  for all $j$-th entry, and the update for $(m_{i,k}^+, P_{i,k}^+)$ becomes the standard Gaussian
filtering problem. In this case, the $\rho(\I_i)$ in \eqref{E: likelihood_formula_2} can also be removed. 

\section{Quantitative Change Estimation Using
Outlier-Robust Filters}
 We discuss how the outlier-robust filter informs change estimation within the framework introduced in Section \ref{sec: likelihoods}. Due to page limitations, we only provide a sketched version of the proofs. 
 
 Let $\widehat{\L}_n := \frac{1}{\ppb^\pi(\nu>n)}\sum_{k=1}^n\pi_k \widehat{\L}_n^k$, 
 \iffalse
 \begin{equation}\label{E: L_n_hat}
\begin{split}
    \widehat{\L}_n := \frac{1}{\ppb^\pi(\nu>n)}\sum_{k=1}^n\pi_k \widehat{\L}_n^k,
\end{split}
\end{equation}\fi
where $ \widehat{\L}_n^k=\prod_{i=k}^n  \widehat{\Lambda}_i^{(k)}$. Then, we have the following approximation result.

\begin{prop}
    For each $n$ and each realization $y_0^n$ of the observation process $Y_0^n$, there exists a constant $C$  such that $|\widehat{\L}_n -\L_n|\leq C\cdot \kl(\rho_k(\Xh_{i-1}|y_0^{i-1})||q_k(X_{i-1})q_i(\I_{i-1})) + \mathcal{O}(\epsilon)$, where $\widehat{\L}_n$ and $\L_n$ are real-valued, and $\mathcal{O}(\eps)\ra 0$ as $\eps\ra 0$. % the real-valued likelihood ratio functions $\widehat{\L}_n^k \approx\L_n^k$ and  $\widehat{\L}_n \approx\L_n$ for any $\eps\ll 1$.
\end{prop}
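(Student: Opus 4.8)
The plan is to propagate a single-step error estimate through the product-and-sum structure that relates $\widehat{\L}_n$ to $\L_n$. First I would note that, for fixed $i\geq k$, the true ratio $\Lambda_i^{(k)}$ of \eqref{E: likelihood_formula} and its filter-based surrogate $\widehat{\Lambda}_i^{(k)}$ of \eqref{E: likelihood_formula_2} are ratios of integrals with \emph{identical integrands}: the only change is that the conditioning law $\rho_k(\Xh_{i-1}|y_0^{i-1})$ (resp.\ $\rho_\infty(\Xh_{i-1}|y_0^{i-1})$) is replaced by the VB product $q_k(X_{i-1})q_k(\I_{i-1})$ (resp.\ $q_\infty(X_{i-1})q_\infty(\I_{i-1})$). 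Writing the numerator of $\Lambda_i^{(k)}$ as $A=\int\psi^{\mathrm{post}}_i(X_{i-1})\,\rho_k(X_{i-1}|y_0^{i-1})\,dX_{i-1}$ with $\psi^{\mathrm{post}}_i(X_{i-1}):=\int \ttg_i^{(k)}(X_{i-1},X_i)\,\ttf_i(y_i|\Xh_i)\,\rho(\I_i)\,d\Xh_i$ (a function of $X_{i-1}$ only), and its surrogate as $\widehat A=\int\psi^{\mathrm{post}}_i(X_{i-1})\,q_k(X_{i-1})\,dX_{i-1}$, one has that $|A-\widehat A|$ is bounded by $\|\psi^{\mathrm{post}}_i\|_\infty$ times the total variation between $\rho_k(\Xh_{i-1}|y_0^{i-1})$ and $q_k(X_{i-1})q_k(\I_{i-1})$ (it suffices to use the $X$-marginal distance, which the joint one dominates); Pinsker's inequality then bounds this by a constant multiple of $\sqrt{\kl(\rho_k(\Xh_{i-1}|y_0^{i-1})\,\|\,q_k(X_{i-1})q_k(\I_{i-1}))}$, and the square root can be absorbed into $C$ on the small-divergence regime of interest. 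The same estimate holds for the denominators $B,\widehat B$ and for the pre-change factor $\psi^{\mathrm{pre}}_i$. Boundedness of $\psi^{\mathrm{post}}_i,\psi^{\mathrm{pre}}_i$ is immediate, as they are products of a Gaussian transition kernel, the Gaussian measurement likelihood \eqref{E: measurement}, and the discrete density $\rho(\I_i)$.

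Next I would convert these numerator/denominator estimates into a bound on $|\Lambda_i^{(k)}-\widehat{\Lambda}_i^{(k)}|$ through the elementary inequality $|A/B-\widehat A/\widehat B|\leq b^{-2}\big(\bar b\,|A-\widehat A|+\bar a\,|B-\widehat B|\big)$, which requires the denominators bounded below by some $b>0$ and the numerators above by $\bar a,\bar b$. Securing such bounds (uniform in $X_{i-1}$, for each fixed $i$ and each fixed realization $y_0^n$) --- in particular the strictly positive lower bound on $\int \tg_i\,\tf_i\,\rho(\I_i)$ --- is the step I expect to be the main obstacle: it amounts to controlling the tails of the filtering law, which is exactly where the exponential-stability Assumption~\ref{ass: dual_sys} and the non-degeneracy of the process noise (for $\eps>0$), or the known Dirac initialization (for $\eps=0$), are used, so that the state is confined to a region on which the Gaussian kernels stay uniformly positive. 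The $\mathcal{O}(\eps)$ term enters here as well, collecting the residual error of the moment-matching approximations used to evaluate $\mu_{i,k},U_{i,k},C_{i,k},P_{i,k}^-$: the predictive spread is of order $\eps$ and collapses to the exact deterministic flow as $\eps\to 0$, so this contribution vanishes in the noise-free limit.

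Finally I would telescope and average. For fixed $k$, the algebraic identity $\prod_{i=k}^n\widehat{\Lambda}_i^{(k)}-\prod_{i=k}^n\Lambda_i^{(k)}=\sum_{j=k}^n\big(\prod_{i=k}^{j-1}\widehat{\Lambda}_i^{(k)}\big)\big(\widehat{\Lambda}_j^{(k)}-\Lambda_j^{(k)}\big)\big(\prod_{i=j+1}^n\Lambda_i^{(k)}\big)$ together with the uniform two-sided bounds on each factor gives $|\widehat{\L}_n^k-\L_n^k|\leq (n-k+1)\,M^{\,n-k}\max_{k\leq j\leq n}|\widehat{\Lambda}_j^{(k)}-\Lambda_j^{(k)}|$ for a suitable $M$. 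Since $|\widehat{\L}_n-\L_n|\leq \ppb^\pi(\nu>n)^{-1}\sum_{k=1}^n\pi_k|\widehat{\L}_n^k-\L_n^k|$ and $\ppb^\pi(\nu>n)>0$, substituting the single-step bound and collecting into one constant $C$ everything that depends only on $n$, $y_0^n$, the prior weights $\{\pi_k\}$, $\ppb^\pi(\nu>n)$ and the kernel bounds (but not on any divergence) yields $|\widehat{\L}_n-\L_n|\leq C\max_{1\leq k\leq i\leq n}\kl\big(\rho_k(\Xh_{i-1}|y_0^{i-1})\,\|\,q_k(X_{i-1})q_k(\I_{i-1})\big)+\mathcal{O}(\eps)$, the single KL term in the statement being read as this maximum over the admissible index range. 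Finally, $\widehat{\L}_n$ and $\L_n$ are real-valued because each is a finite sum of finite products of bounded positive ratios divided by the positive scalar $\ppb^\pi(\nu>n)$.
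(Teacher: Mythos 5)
Your overall strategy matches the paper's: both arguments reduce the claim to a per-step bound on $|\widehat{\Lambda}_i^{(k)}-\Lambda_i^{(k)}|$ obtained by comparing integrals against the exact filtering law versus its VB surrogate, invoke Pinsker's inequality to pass from a total-variation distance to the KL divergence, and then push the estimate through the sum-of-products structure of $\L_n$ --- a final step the paper compresses into ``follows immediately from the definitions'' and you make explicit via the telescoping identity and the ratio inequality, which is a genuine improvement in rigor. The main substantive divergence is the provenance of the $\mathcal{O}(\eps)$ term. The paper introduces the exit time $\tau^*_\eps=\inf\{t\geq 0:\|X_t\|\geq\eps^{-z}\}$ and uses the exponential stability of Assumption~\ref{ass: dual_sys} to truncate all integrals to the bounded region $D_\eps=\{x:\|x\|<\eps^{-z}\}$; the tail contribution outside $D_\eps$ is the $\mathcal{O}(\eps)$ remainder, and the truncation simultaneously supplies the uniform bounds on the integrands that you correctly identify as ``the main obstacle.'' You instead attribute $\mathcal{O}(\eps)$ to the moment-matching error of the Gaussian filter updates, which is a different error source (and not obviously of size $\mathcal{O}(\eps)$ without further argument), and you use stability only to secure the denominator lower bound. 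Your route would still need the paper's truncation (or an equivalent tail estimate) to justify the $\|\psi\|_\infty$-type bounds uniformly.

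One step is wrong as written: you claim the $\sqrt{\kl(\cdot\|\cdot)}$ produced by Pinsker ``can be absorbed into $C$ on the small-divergence regime of interest.'' The inequality $\sqrt{x}\leq Cx$ holds only for $x\geq C^{-2}$, i.e., it fails precisely in the small-divergence regime; the honest conclusion of your argument is a bound of the form $C\sqrt{\kl(\cdot\|\cdot)}+\mathcal{O}(\eps)$ rather than $C\,\kl(\cdot\|\cdot)+\mathcal{O}(\eps)$. To be fair, the paper's own sketch also passes through Pinsker and glosses over the same exponent with the phrase ``uniformly continuous to,'' so the linear-in-KL form of the statement is a shared weakness rather than a defect unique to your argument --- but your explicit justification for discarding the square root is the one step that is demonstrably false as stated, and you should either keep the square root in the final bound or replace Pinsker with an inequality that is genuinely linear in the divergence.
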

\begin{proof}
        Note that for each realization, each random probability measure in \eqref{E: likelihood_formula} and \eqref{E: likelihood_formula_2}  becomes a probability. Let   $\tau^*_\eps:=\inf\set{t\geq 0: \|X_t\|\geq \eps^{-z}}$ for some $z\in(0, 1)$. Then, due to the exponential stability property based on Assumption \ref{ass: dual_sys}, it can be shown that   $\mathds{1}_{\{t_i<\tau^*_\eps\}} \ra 1$ as $\eps\ra 0$.  For each $k\in\set{1, 2, \cdots, \infty}$, let $\widetilde{\rho}_{i,k,y}(\Xh_{i-1}, \Xh_i):=g_i(X_{i-1}, X_i)h_i(y_i|\Xh_i)\rho_k(\Xh_{i-1}|y_0^{i-1})\rho(\I_i)$ denote the joint density, and similarly $\widetilde{q}_{i,k,y}(\Xh_{i-1}, \Xh_i):=g_i(X_{i-1}, X_i)h_i(y_i|\Xh_i)q_k(X_{i-1})q_k(\I_{i-1})\rho(\I_i)$, where $y$   represents $y_0^i$. 
        
        Then, $\widetilde{q}_{i,k,y}(\Xh_{i-1}, \Xh_i)\leq\widetilde{q}_{i,k,y}(\Xh_{i-1}, \Xh_i)\mathds{1}_{\{t_i<\tau^*_\eps\}\cap \{t_{i-1}<\tau^*_\eps\}}+\widetilde{q}_{i,k,y}(\Xh_{i-1}, \Xh_i)\mathds{1}_{\{t_i\geq\tau^*_\eps\}\cup \{t_{i-1}\geq\tau^*_\eps\}}$, where the first term is uniformly continuous to $q_k(X_{i-1})q_k(\I_{i-1})$ given the continuity of $g_i$ in $X_{i-1}$ and the boundedness on $D_\eps:=\set{x: \|x\|<\eps^{-z}}$, and the second term is of $\mathcal{O}(\eps)$. Based on this property, by separating the integrand in   the definition of $\kl$ as above, one can show that $\kl(\widetilde{q}_{i,k,y}||\widetilde{\rho}_{i,k,y})$ is uniformly continuous to $\kl(\rho_k(\Xh_{i-1}|y_0^{i-1})||q_k(X_{i-1})q_i(\I_{i-1}))$. %Thereby, $\widetilde{q}_{i,k,y}\approx \widetilde{\rho}_{i,k,y}$ for arbitrarily each small $\eps$ and realization $y_0^i$ in the KL topology. 
        Combining this with   the well known  Pinsker's inequality \cite{csiszar2011information}, which bounds the total variation norm of two distribution by the KL divergence,  we have $|\iint_A  \widetilde{q}_{i,k,y}(\Xh_{i-1}, \Xh_i)d\Xh_{i-1}d\Xh_i - \iint_A\widetilde{\rho}_{i,k,y}(\Xh_{i-1}, \Xh_i) d\Xh_{i-1}d\Xh_i|$ is uniformly continuous to $\kl(\rho_k(\Xh_{i-1}|y_0^{i-1})||q_k(X_{i-1})q_i(\I_{i-1})) + \mathcal{O}(\epsilon)$ for any measurable event $A$ of $(\Xh_{i-1},\Xh_i)$, which indicates that there exists some $\widehat{C}>0$ that $|\widehat{\Lambda}_i^{(k)}-\Lambda_i^{(k)}|\leq \widehat{C}\kl(\rho_k(\Xh_{i-1}|y_0^{i-1})||q_k(X_{i-1})q_i(\I_{i-1})) + \mathcal{O}(\epsilon)$ for each $k\leq i$. The conclusion therefore follows immediately from the definitions of $\widehat{\L}_n$ and $\L_n$.
\end{proof}

The convergence of $\kl(\rho_k(\Xh_{i-1}|y_0^{i-1})||q_k(X_{i-1})q_i(\I_{i-1})) $ 
is guaranteed  by \cite{chughtai2022outlier}. We continue to review the optimal stopping procedure for QCD algorithms that utilize outlier-robust filters, as briefly discussed in Section \ref{sec: likelihoods}.

For models where $\Lambda_i^{(k)}$ is independent of $k$,  the Shiryaev stopping rule has been proven to have the following property \cite{tartakovsky2005general}. Assuming $\frac{1}{n}\sum_{i=k}^n\log(\Lambda_i)\ra \phi$ as $n\ra\infty$ almost surely in $\ppb_k$ for every $k$, 
\begin{small}
    \begin{equation}\label{E: aysmp_optimality}
    \inf_{\tau\in\mathcal{C}(a)}\add(\tau)\sim\add(\tau_s(B_a))\sim\frac{|\log a|}{\phi+|\log(1-d)|}\;\;\text{as}\;a\ra0,
\end{equation}
\end{small}

\noindent recalling that $d\in(0, 1)$ is such that $\pi_i=d(1-d)^{k-1}$. In practice, it has been shown that Assumption \ref{ass: dual_sys} can guarantee the almost-sure convergence of $\frac{1}{n}\sum_{i=k}^n\log(\Lambda_i)$ \cite{fuh2018asymptotic}. 

\begin{rem}
   Note that in \eqref{E: aysmp_optimality}, optimality can  be achieved asymptotically as $a\ra 0$ for non-i.i.d. observations. %, whereas exact optimality holds for every $a$ in the i.i.d. case.

   The intuition behind the proof of \eqref{E: aysmp_optimality} is to first establish the lower bound of $\add(\tau)$ for any $\tau\in\mathcal{C}(a)$, which is achieved by applying Chebyshev's inequality  $\add(\tau)=\frac{\eeb^\pi[(\tau-\nu)^+]}{\ppb^\pi(\tau\geq\nu)} 
           \geq  \frac{(1-\sigma)|\log a|}{(\phi+|\log(1-d)|)}\left[1- \frac{\gamma_{\sigma, a}(\tau)}{\ppb^\pi(\tau\geq\nu)} \right]$ 
for any $\sigma\in(0, 1)$, where $\gamma_{\sigma, a}(\tau)=\ppb^\pi\set{\nu\leq \tau<\nu+(1-\sigma)\frac{|\log a|}{\phi+|\log(1-d)|}}$ can be shown to converge to $0$ for any $\sigma\in(0, 1)$ as $a\ra 0$. This fact indicates that, even when aiming to reduce the average lag of change detection, the requirement to accommodate a small probability of false alarms causes the detection procedure to place greater emphasis on the tail for time instants longer than $\nu+\frac{|\log a|}{\phi+|\log(1-d)|}$. On the other hand, one can show that $\eeb_k[(\tau_s(B_a)-k)^+]\leq \eeb_k[\eta(k)\mathds{1}_{\tau_s(B_a)\geq k}]\leq \eeb_k[\eta(k)]$, where 
$\eta(k)=\inf\set{n\geq 1: \sum_{i=k}^{k+n-1}\Lambda_i+n\log(1-d)\geq \log(B_a)}$ and $\eeb_k[\eta(k)/\log(B_a)]\ra\frac{1}{\phi+\log(1-d)}$ as $a\ra 0$. This fact indicates that the tail effect of the distribution of $\pi_k$ does not distort the optimality, and it facilitates establishing the upper bound for $\add(\tau_s(B_a))$ as  $a\ra 0$. \Qed 
\end{rem}

\begin{figure*}[!t]
\normalsize
\begin{subfigure}[b]{0.45\textwidth}
        \includegraphics[scale=0.38]{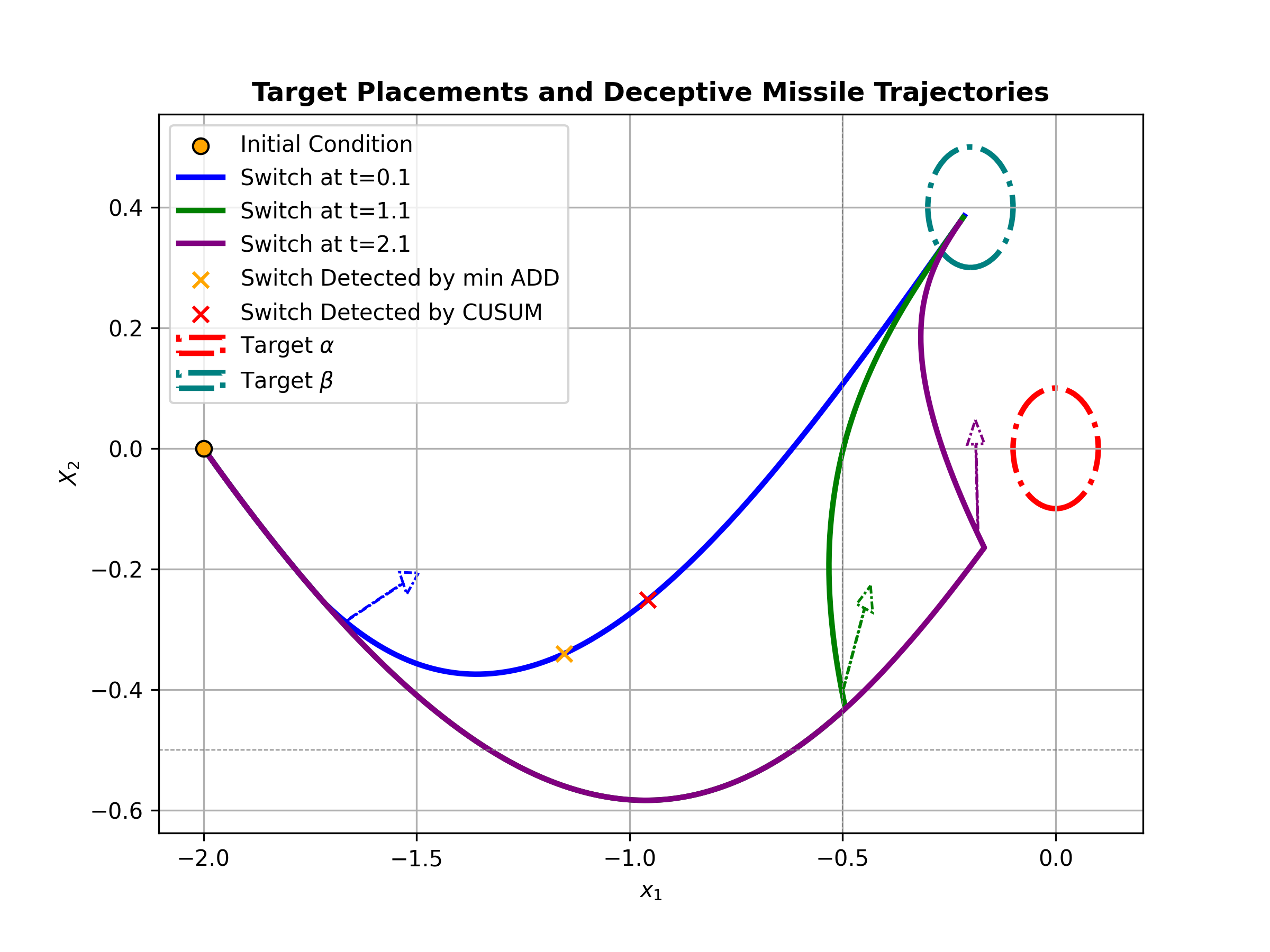}
        \caption{Missile Trajectories Under  Deceptive Switching Strategies}
        \label{fig:subfig1}
    \end{subfigure}
\begin{subfigure}[b]{0.45\textwidth}
        \includegraphics[scale=0.38]{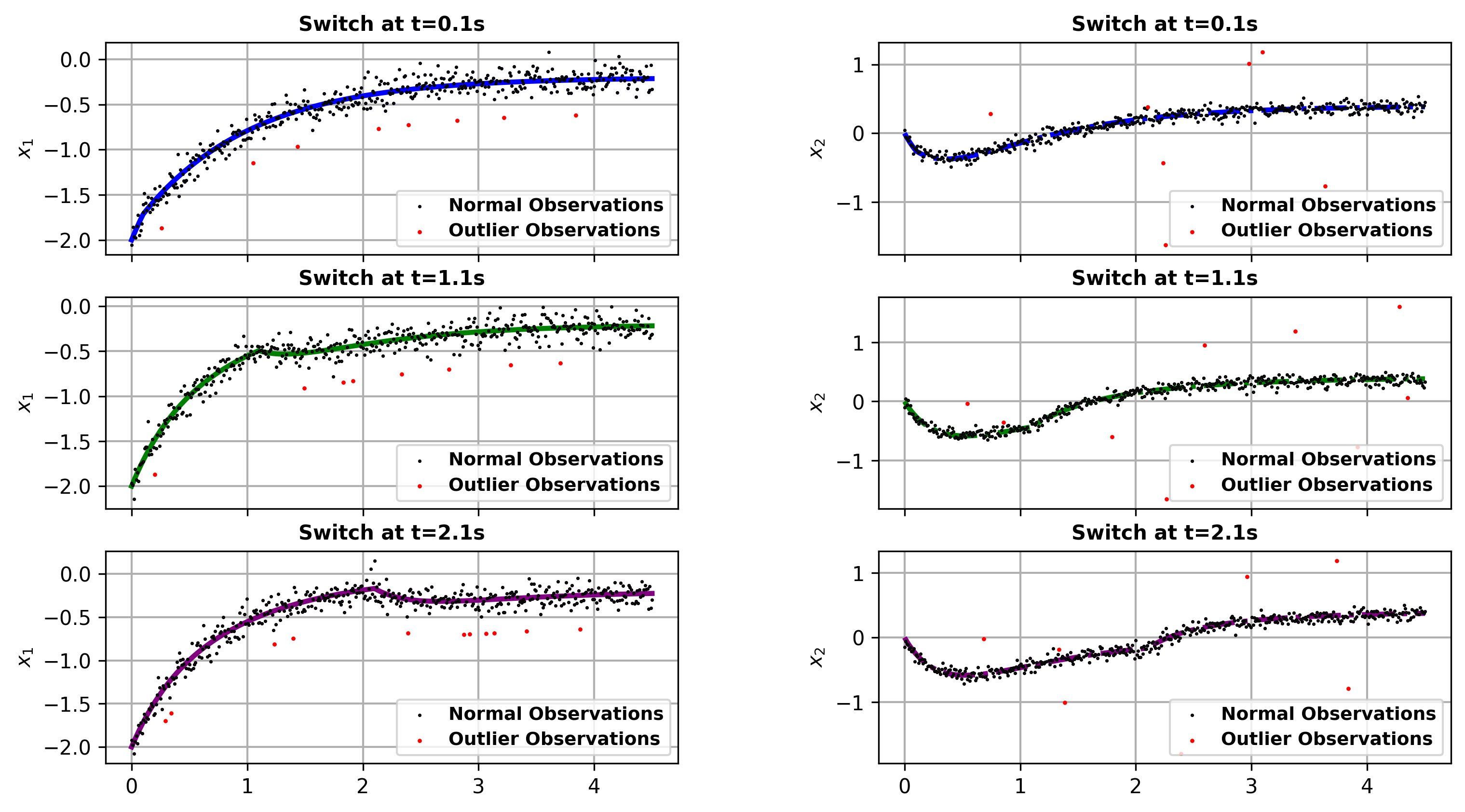}
        \caption{A Realization of Observation Signals}
        \label{fig:subfig2}
    \end{subfigure}    
    
    \begin{subfigure}[b]{0.42\textwidth}
        \includegraphics[scale=0.35]{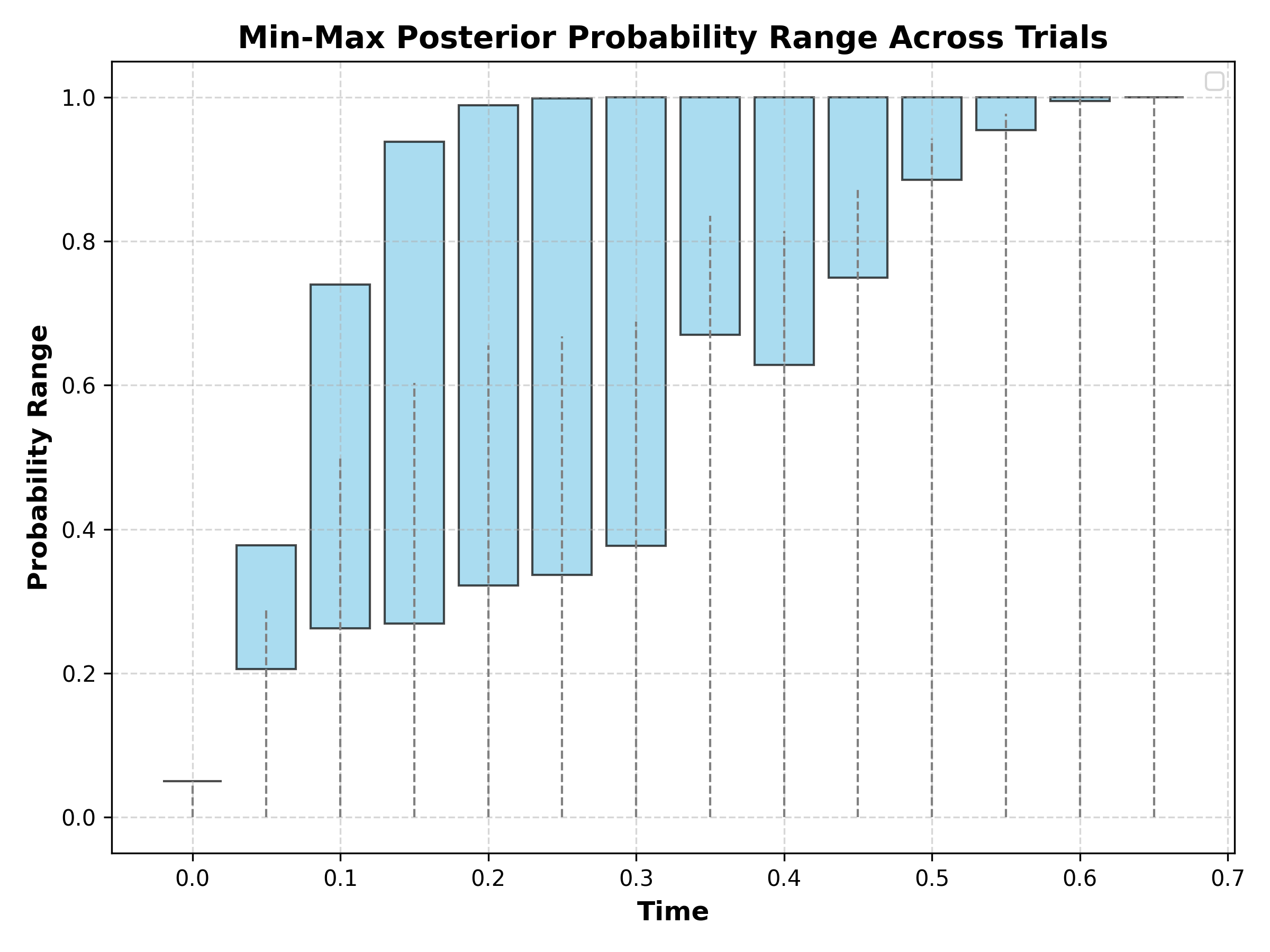}
        \caption{Posterior probability estimation for the  switching strategy using $\widehat{\L}_n$}
        \label{fig:subfig3}
    \end{subfigure}    \qquad\qquad\qquad\quad
    \begin{subfigure}[b]{0.42\textwidth}
        \includegraphics[scale=0.35]{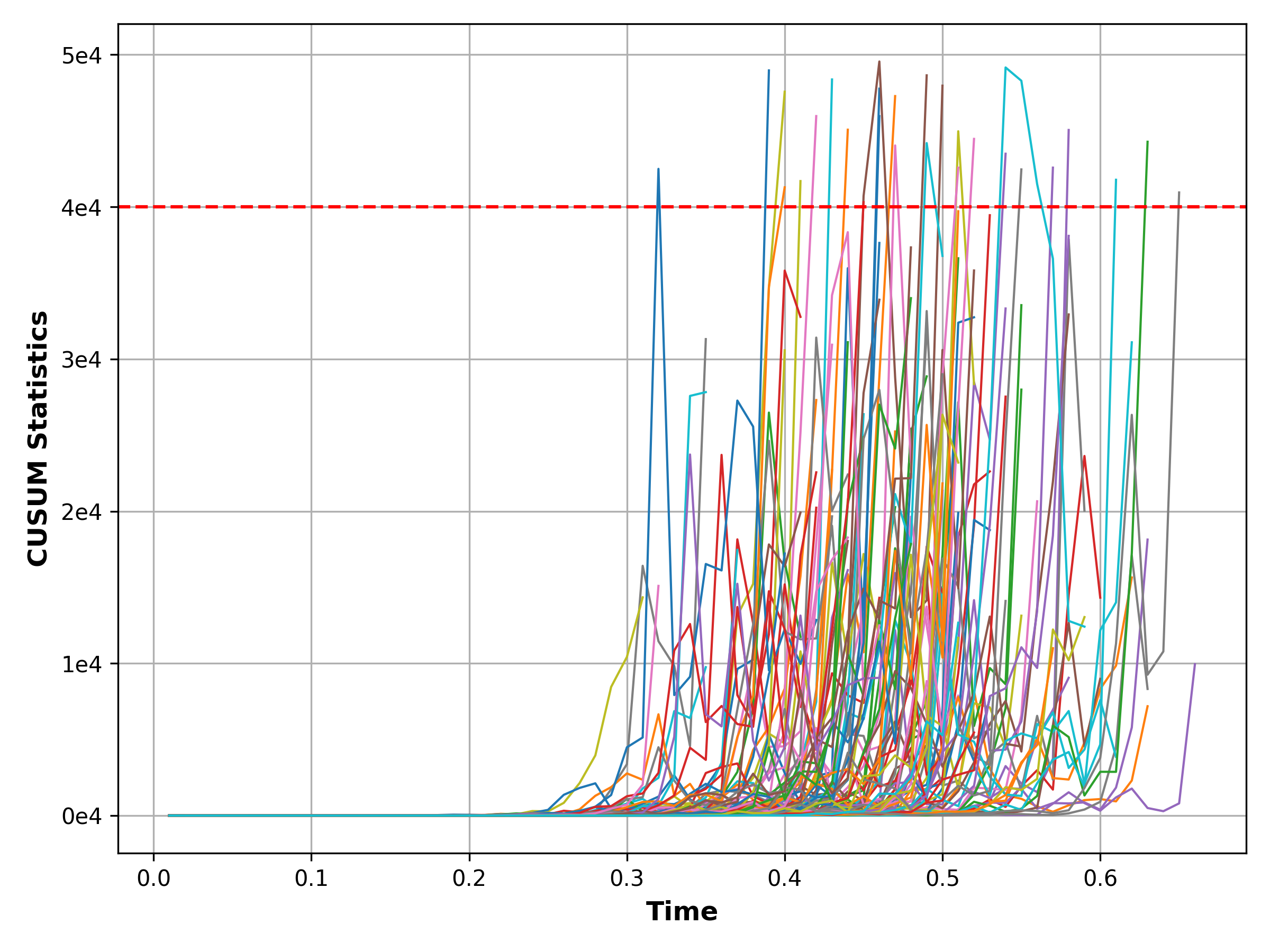}
        \caption{CUSUM statistics for minimizing CADD}
        \label{fig:subfig4}
    \end{subfigure}    
    %\includegraphics[scale=0.38]{trajectories.png}
    %\includegraphics[scale=0.4]{trajectory_evolution.png}
    % \includegraphics[width = 0.48\textwidth]{Figures/RMSE_RTM_100hz_1.png}
    %\caption{Empirical Testing of Relations between Parameters for the RTM under $100$ Hz Sampling Rate.}
    %\label{fig: 100hz}
    \caption{Missile Trajectories and Corresponding Observation Signals Under Different Deceptive Switching Strategies.}
    \label{fig:mainfig}
\hrulefill
\vspace*{4pt}
\end{figure*}

In practice, when using the Shiryaev stopping rule with the outlier-robust filter, we need to set a small $a$, and then replace $\Lambda_i$ with $\widehat{\Lambda}_i$ to   track the statistic following \eqref{E: Shiryaev}. Note that $\phi$ indicates the rate at which the difference between the pre- and post-change distributions accumulates on the log scale after the agent has made the deception decision. The cumulative error of using $\widehat{\Lambda}_i$ naturally depends on how different the post-change signal is from the unchanged signal.

Similarly, to detect the worst-case average delay of change detection, the CUSUM stopping procedure $\tau_c$ in \eqref{E: cusum} is proven to be asymptotically optimal as $a\ra0$ \cite{fuh2003sprt}.  One can update  $T_n$ using the $\widehat{\Lambda}_n$ in practice for approximation. %$T_n=(T_n+\log\Lambda_n)^+$ iteratively when $\Lambda_n^{(k)}$ does not depend on $k$, and can replace $\Lambda_n$ with $\widehat{\Lambda}_n$ in practice for approximation. Suppose that $\Lambda_i^{(k)}$  depends on $k$. An alternative statistic for the worst-case cumulative sum was proposed in \cite{fuh2003sprt} as follows: $\tau_c'=\inf\set{n\geq 1: T_n'\geq c}$, where $T_n':=\max_{1\leq k\leq n} \sum_{i=k}^n Z_i^{(1)}$, which satisfies the simple recursion $T_n' = (T_{n-1}'+\Lambda_n^{(1)})^+$.  An outlier-filter-informed statistic can be updated by replacing  $\Lambda_n^{(1)}$ with  $\widehat{\Lambda}_n^{(1)}$.

\iffalse
\begin{deff}[Kullback-Leibler (KL) Divergence]
Let $\eta_0$ be the probability density function of a distribution to be compared with, and let $\eta_1$ be another probability density function. Then, the KL divergence, which measures how much $\eta_1$ differs from $\eta_0$, is defined as
\begin{equation}
   \kl (\eta_1||\eta_0) :=\int \log\left(\frac{\eta_1(x)}{\eta_0(x)}\right)\eta_1(x)dx. 
\end{equation}  
\end{deff}

\begin{rem}
 The KL divergence   is also the key to obtaining conditions that guarantee the effective operation of a QCD algorithm and explaining the intuition behind the algorithms. We detail this later in the section. \Qed
\end{rem}
\fi

\section{Case Study}
In this section, we introduce the scenario in which we aim to predict the target of an attacking missile, commonly modeled using a normalized unicycle model \cite{chen2019nonlinear, merkulov2024reinforcement}. For simplicity in demonstrating the idea, we ignore the stochastic input of the system, i.e., we set $\eps = 0$. We represent the deterministic model as follows:
\begin{small}
    \begin{equation}\label{E: sys2}
    \frac{d}{dt}\begin{bmatrix}
    x_1(t)\\ x_2(t) \\ \theta(t) \end{bmatrix}=  \begin{bmatrix}
     \cos{\theta}(t)  & 0\\
    \sin{\theta}(t) & 0\\ 0 & 1
    \end{bmatrix}\begin{bmatrix}
    v(t)\\
    u(t)
    \end{bmatrix} 
    %+\begin{bmatrix} 0\\ 0\\ 1
    %\end{bmatrix} w(t),   
\end{equation} 
\end{small}

\noindent where $X=(x_1, x_2)\in\R^2$ represents the position, $\theta\in[-\pi, \pi]$ is the angle between the $x_1$ axis and the velocity vector   of
the attacker.  The control inputs are $u=(v, w)$, where  $v$ is the velocity and $w$ represents the lateral acceleration.  We consider the initial condition as $(x_1(0), x_2(0), \theta(0))=(-2, 0, -\pi/4)$, and let $x_{e,\alpha}=(0, 0)$ and $x_{e,\beta}=(-0.2, 0.4)$, and assign the targets as $\Gamma_j:=\set{x\in\R^2, \|x-x_{e,j}\|\leq 0.1}$ for $j\in\set{\alpha, \beta}$.   To reach each target $\Gamma_j$, $j\in\set{\alpha, \beta}$, we assume that the attacker follows an optimal guidance law with the common objective of minimizing its control effort, defined as $\mathcal{J}_j(u)=\int_0^\infty 10\|x(t)-x_{e,j}\|^2 + \|u(t)\|^2 dt$. The optimal control laws $\kappa_j(x)$ can then be obtained accordingly.

We also set the discrete-time observation sampling period to be $\delta_t = 0.01$, and the observations to be $Y_n = X_n + V_n$, where $X_n = X(n\delta_t)$,  $V_n\sim \mathcal{N}(0, R_n)$,  and $R_n=\operatorname{diag}(0.1, 0.06)$. For each dimension of the observer and each observation instant, we assume the probability of outlier occurrence is $0,02$, i.e., $\theta_{i,l} = 0.98$ for all $i$ and all $l\in\set{1, 2}$. The indicator value for outlier appearance is set to $\varsigma = 0.08$.

\begin{rem}
   For nonlinear systems, a nonlinear Hamilton–Jacobi–Bellman (HJB) equation must be solved to construct the controllers. This nonlinear problem has been extensively studied in the literature \cite{beard1995improving, meng2024physics, jiang2014robust}, so we omit further details here. Alternatively, the optimal strategy can also be approximately obtained by linearizing the system and applying the Linear–Quadratic Regulator. \Qed
\end{rem}

Fig.~(\ref{fig:subfig1}) shows the missile  trajectories under different realizations of the deceptive change instants.  It can be seen that the trajectory tends to be smoother when the switch point occurs earlier. Correspondingly, as shown in Fig. (\ref{fig:subfig2}), when the observations are contaminated by a fair amount of noise and possible outliers, the deception effect becomes more convincing at earlier switch instants from the deception agent’s point of view, as the observer can hardly distinguish the trajectory trend based solely on the value of $Y$ visually. To make the deceptive switching strategy perform well, the agent can set $d = 0.05$, whence $\pi_k= 0.05\times 0.95^{k-1}$. Below, we present two scenarios of target prediction: one in which the prior distribution $\pi_k$ of the deceptive instants  is known, and another in which the distribution  is unknown. In both cases, the agent draws a switching moment from the distribution $\pi_k$, and we only use realization $\nu = 10$ (i.e., $t=0.1$) to demonstrate the numerical results. %as a realization, the agent draws a switching moment from the distribution  with $\nu = 10$.

A. \textit{Posterior Target Prediction with Known Prior Knowledge of Switching Moments.} In this scenario, the outlier-robust filter-induced likelihood $\widehat{L}_n$ is first used to estimate the posterior probability of a deceptive switch in the target. We require the largest false alarm probability to be $a= 0.001$, and use the approximation sequence ${\widehat{L}_n}$ to find the stopping time $\tau_s$ that minimizes the ADD. Given the realization in the corresponding scenario shown in Fig. (\ref{fig:subfig2}), the state at the minimizing moment is marked in Fig. (\ref{fig:subfig1}), indicating a reasonably good timing for detecting the deceptive switching behavior and preventing potential hazards from occurring. 

We also test a total of $1000$ realizations of $Y$, and the range of posterior probabilities computed based on $\widehat{L}_n$ is plotted in Fig. (\ref{fig:subfig3}). The quantitative estimation indicates that after time $0.6$ (also marked by a yellow cross in Fig. (\ref{fig:subfig1})), it is almost certain that a deceptive switch has occurred, and some action should be taken by the counter-deception agent.

B. \textit{Target Prediction with Unknown Prior Knowledge of Switching Moments.} 
In this scenario, since the prior knowledge of $\pi_k$ is unknown, we directly use the statistic $\{\widehat{\Lambda}_n\}$ to track the worst-case estimation of the CADD. Similarly to the previous scenario, we mark the state at the minimizing moment in Fig. (\ref{fig:subfig1}), given the realization in the corresponding scenario shown in Fig. (\ref{fig:subfig2}). Although this optimal stopping time occurs later than in the previous case, it still indicates a reasonably good timing for detecting the deceptive switching behavior.  We also test a total of $1000$ realizations of $Y$, and observe the random stopping time when the statistic exceeds the threshold of $4e4$, as plotted in Fig.~(\ref{fig:subfig4}). The range of this random stopping moment is concentrated around time $0.5$, though not uniformly across all sample paths when compared to the final scenario.  This quantitative estimation  performs well in informing when the observer  should take action. %Nevertheless, it remains   effective in terms of the trajectory’s distance from   the true target.

\section{Conclusion}
In this paper, we discuss an alternative formulation of target prediction, where the agent is initially believed to be aiming at one target but decides to switch to another midway. The agent behaves deceptively, taking advantage of the fact that the observer only has access to noisy observations and can hardly detect the change in signal visually. We contribute by introducing a detection strategy based on the discussed deceptive behavior, which has not been explored in the literature within the context of deception under constraints of limited and imperfect observations. We enhance the robustness of inference by deriving an outlier-robust formulation of the likelihood function, which is subsequently used to estimate the posterior probability of whether a deceptive switch has occurred—thereby improving computational performance in tracking statistics. Moreover, this likelihood function aligns well with the rich literature on QCD algorithms, enabling a reduction in the number of observations required to determine whether the deceptive switching has taken place. The method is tested on a weapon-target assignment   problem and performs well in fulfilling the task.

Although the model used in this paper considers only two targets with a known prior distribution, the framework can be extended to more complex scenarios. Inspired by the rich QCD literature, a natural extension is to address cases where the pre- or post-change control strategy is unknown. Assuming they belong to single-parameter exponential families, a generalized likelihood ratio approach can be used to infer post-change statistics and determine the optimal stopping time from observations simultaneously. We can then extend this approach to multiple-target detection under deceptive switching strategies, using a similar methodology as outlined above. Interesting formulations can be expected and will be rigorously analyzed to understand the effects of target placement and the properties of the families of the agent's control laws, especially when the task goes beyond simple reachability or stability.

%%%%%%%%%%%%%%%%%%%%%%%%%%%%%%%%%%%%%%%%%%%%%%%%%%%%%%%%%%%%%%%%%%%%%%%%%%%%%%%%
% \section{ACKNOWLEDGMENTS}

% This research was supported in part by an NSERC Discover Grant, an Ontario Early Researcher Award, and the Canada Research Chairs program. This research was enabled in part by support provided by the Digital Research Alliance of Canada (alliance.ca)
\iffalse
\textbf{Next:}
\begin{enumerate}
    \item Show the recursive update of $\rho(\widehat{X}_{i}|Y_0^{i-1})$ using \textit{Outlier-Robust Nonlinear Filtering}.
    \item Show explicit form of $\tf_i$ and $\ttf_i$. 
    \item Show the approximation from 1) can still make \eqref{E: qcd_scheme} work. ($70\%$,  $\sup_{k\geq 1}\eeb_n[\tau-k \;|\; \tau\geq k]\geq \frac{|\log\alpha|}{I}(1+\mathcal{O}(1)$ done.)
    \item Numerical example (the key is to provide the formula for 1) and 2)). ($30\%$, code for particle filter done.)
\end{enumerate}
\fi

\bibliographystyle{plain}
\bibliography{acc24}

\end{document}